\newtheorem{lemma}{Lemma}
\newtheorem{corollary}{Corollary}
\newtheorem{definition}{Definition}
\newtheorem{theorem}{Theorem}
\newtheorem{example}{Example}
\title{Linear Fractional Network Coding and Representable Discrete Polymatroids}
\begin{document}

\author{
\authorblockN{Vijayvaradharaj T. Muralidharan and B. Sundar Rajan}
\authorblockA{Dept. of ECE, Indian Institute of Science, Bangalore 560012, India, Email:{$\lbrace$tmvijay, bsrajan$\rbrace$}@ece.iisc.ernet.in
}
}

\maketitle
\begin{abstract}
A linear Fractional Network Coding (FNC) solution over $\mathbb{F}_q$ is a linear network coding solution over $\mathbb{F}_q$ in which the message dimensions need not necessarily be the same and need not be the same as the edge vector dimension. Scalar linear network coding, vector linear network coding are special cases of linear FNC. 
In this paper, we establish the connection between the existence of a linear FNC solution  for a network over $\mathbb{F}_q$ and the representability over $\mathbb{F}_q$ of discrete polymatroids, which are the multi-set analogue of matroids. All previously known results on the connection between the scalar and vector linear solvability of networks and representations of matroids and discrete polymatroids follow as special cases. An algorithm is provided to construct networks which admit FNC solution over $\mathbb{F}_q,$ from discrete polymatroids representable over $\mathbb{F}_q.$  Example networks constructed from discrete polymatroids using the algorithm are provided, which do not admit any scalar and vector solution, and for which FNC solutions with the message dimensions being different provide a larger throughput than FNC solutions with the message dimensions being equal.  
\end{abstract}
\section{Introduction and Background}
Network coding is a technique in which intermediate nodes combine packets before forwarding them, instead of simply routing the packets. In \cite{Ah}, Ahlswede et. al. showed that there exists networks which do not admit any routing solution, but admit scalar linear network coding solutions. In \cite{Li}, it was shown that for multicast networks, scalar linear solutions exist for sufficiently large field size. An algebraic framework for finding linear solutions in networks was introduced in \cite{KoMe}.

It was shown in \cite{MeEfHoKa} that there exists networks which do not admit any scalar linear solution over $\mathbb{F}_q,$ but admit vector linear solution over $\mathbb{F}_q.$ In scalar and vector  network coding, it is inherently assumed that the dimensions of the message vectors are the same and it is also the same as the dimensions of the vectors carried in the edges of the network.  
It is possible that a network does not admit any scalar or vector solution, but admits a solution if all the dimensions of the message vectors are not equal to the edge vector dimension. Such network coding solutions, called Fractional Network Coding (FNC) solutions have been considered in \cite{CaDoFrZe,KiMe_FNC,DoFrZe_FNC}. The work in \cite{CaDoFrZe} primarily focusses on fractional routing, which is a special case of FNC. 
In \cite{KiMe_FNC}, algorithms were provided to compute the capacity region for a network, which was defined to be the closure of all rates achievable using FNC. In \cite{DoFrZe_FNC}, achievable rate regions for certain specific networks were found and it was shown that achievable rate regions using linear FNC need not be convex.

In \cite{DoFrZe}, the connection between scalar linear network coding and representable matroids was established. It was shown in \cite{DoFrZe} that if a scalar linear solution over $\mathbb{F}_q$ exists for a network, then the network is matroidal with respect to a matroid representable over $\mathbb{F}_q.$ The converse that a scalar linear solution exists for a network if the network is matroidal with respect to a matroid representable over $\mathbb{F}_q$ was shown in \cite{KiMe}. A procedure to construct networks from matroids was provided in \cite{DoFrZe}, using which it was shown in \cite{DoFrZe_In} that there exists networks which are solvable but are not scalar or vector linearly solvable. The relationship between network coding, index coding and representations of matroids was analyzed in \cite{RoSpGe}. In \cite{PrRa_ISIT}, the notion of matroidal networks introduced in \cite{DoFrZe} was extended to networks with error correction capability and it was shown that a network admits a scalar linear
error correcting network code if and only if it is a matroidal error correcting network associated with a representable matroid. In \cite{PrRa_ISIT,PrRa_ISITA}, networks with error correction capability were constructed from matroids. 

Discrete polymatroids, introduced by  Herzog and Hibi in \cite{HeHi}, are the multi-set analogue of matroids. In our recent work \cite{VvR_ISIT13_arXiv}, the notion of a discrete polymatroidal network was introduced and it was shown that a vector linear solution over $\mathbb{F}_q$ exists for a network if and only if it is discrete polymatroidal with respect to a discrete polymatroid representable over $\mathbb{F}_q.$ In this paper, we provide a more general definition of a discrete polymatroidal network and establish the connection between the representability over $\mathbb{F}_q$ of discrete polymatroids and linear FNC.  
The contributions of this paper are as follows:
\begin{itemize}
\item
The notion of a \textit{$(k_1,k_2,\dotso,k_m;n)$-discrete polymatroidal network} is introduced. For a network in which $m$ message vectors are generated, it is shown that an FNC solution with the $m$ message vectors  dimensions being $k_1,k_2,\dotso k_m$ and the edge vector dimension being $n$ exists if and only if the network is $(k_1,k_2,\dotso,k_m;n)$-discrete polymatroidal with respect to a discrete polymatroid representable over $\mathbb{F}_q.$
\item
The algorithm introduced in \cite{VvR_ISIT13_arXiv} to construct vector linear solvable networks from representable discrete polymatroids, is generalized to obtain networks which admit linear FNC solutions.
\item
 Example networks constructed from discrete polymatroids are provided, which do not admit any scalar and vector solution, and for which FNC solutions with the message dimensions being different provide a larger throughput than FNC solutions for which the message dimensions are the same.   
\end{itemize}
\textbf{\textit{Notations:}}
The set $\lbrace 1,2,\dotso,r \rbrace$ is denoted as  $\lceil r \rfloor.$ $\mathbb{Z}_{\geq 0} $ denotes the set of non-negative integers. For a vector $v$ of length $r$ and $A \subseteq \lceil r \rfloor,$ $v(A)$ is the vector obtained by taking only the components of $v$ indexed by the elements of $A.$ The $r$ length vector whose $i^{\text{th}}$ component is one and all other components are zeros is denoted as $\epsilon_{i,r}.$  For $u,v \in \mathbb{Z}_{\geq 0}^r,$ $u \leq v$ if all the components of $v-u$ are non-negative and,  $u < v$ if $u \leq v$ and $u \neq v.$ For $u,v \in \mathbb{Z}_{\geq 0}^r,$ $u \vee v$ is the vector whose $i^{\text{th}}$ component is the maximum of the $i^{\text{th}}$ components of $u$ and $v.$ A vector $u \in \mathbb{Z}_{\geq 0}^r$ is called an integral sub-vector of $v \in \mathbb{Z}_{\geq 0}^r$ if $u < v.$ For a set $A,$ $\vert A \vert$ denotes its cardinality and for a vector $v \in  \mathbb{Z}_{\geq 0}^r,$ $\vert v \vert$ denotes the sum of its components. 

\section{Preliminaries}
\subsection{Fractional Network Coding: Definitions and Notations}
A communication network consists of a directed, acyclic graph with the set of vertices denoted by $\mathcal{V}$ and the set of edges denoted by $\mathcal{E}.$ For an edge $e$ directed from $x$ to $y,$ $x$ is called the head vertex of $e$ denoted by $head(e)$ and $y$ is called the tail vertex of $e$ denoted by $tail(e).$ The in-degree of an edge $e$ is the in-degree of its head vertex and out-degree of $e$ is the out-degree of its tail vertex. The messages in the network are generated at edges with in-degree zero, which are called the input edges of the network and let $\mathcal{S}  \subset \mathcal{E}$ denote the set of input edges with $\vert \mathcal{S} \vert =m.$ Let $x_i, i \in \lceil m \rfloor,$ denote the row vector of length $k_i$ generated at the $i^{\text{th}}$ input edge of the network. Let $x=[ x_1,x_2,\dotso,x_m ].$ An edge which is not an input edge is referred to as an intermediate edge. All the intermediate edges in the network are assumed to carry a vector of dimension $n$ over $\mathbb{F}_q.$ A vertex $v \in \mathcal{V}$ demands the set of messages generated at the input edges given by $\delta(v) \subseteq \mathcal{S},$ where $\delta$ is called the demand function of the network. $In(v)$ denotes the set of incoming edges of a vertex $v $ ($In(v)$ includes the intermediate edges as well as the input edges which are incoming edges at node $v$) and $Out(v)$ denotes the union of the set of intermediate edges originating from $v$ and $\delta(v).$

A $(k_1,k_2,\dotso,k_m;n)$-FNC solution over $\mathbb{F}_q$ is a collection of functions $\lbrace \psi_e : \mathbb{F}_q^{\sum_{i=1}^{m}k_i} \rightarrow \mathbb{F}_q^{k_i} , e \in \mathcal{S}\rbrace \cup \lbrace \psi_e : \mathbb{F}_q^{\sum_{i=1}^{m}k_i} \rightarrow \mathbb{F}_q^{n} , e \in \mathcal{E}\setminus\mathcal{S}\rbrace,$ where the function $\psi_e$ is called the global encoding function associated with the edge $e.$  The global encoding functions satisfy the following conditions:
\begin{description}
\item [(N1):]
$\psi_i(x)=[x_i], \forall i \in \mathcal{S},$
\item [(N2):]
For every $v \in \mathcal{V},$ for all $j \in \delta(v),$ there exists a function $\chi_{v,j} : \mathbb{F}_{q}^{n\vert In(v)\vert} \rightarrow \mathbb{F}_q^{k_j}$ called the decoding function for message $j$ at node $v$ which satisfies $\chi_{v,j}(\psi_{i_1}(x),\psi_{i_2}(x), \dotso, \psi_{i_t(x)})=x_{j},$  where $In(v)=\{i_1,i_2, \dotso i_t \} .$
\item [(N3):]
For all \mbox{$i \in \mathcal{E} \setminus \mathcal{S},$} there exists \mbox{$\phi_i: {\mathbb{F}_q}^{n \vert In(head(i))\vert} \rightarrow \mathbb{F}_q^n$} such that \mbox{$\psi_i(x)=\phi_i(\psi_{i_1}(x),\psi_{i_2}(x), \dotso, \psi_{i_r}(x)),$}  where $In(head(i))=\{i_1,i_2, \dotso i_r \}.$ The function $\phi_i$ is called the local encoding function associated with edge $i.$ 
\end{description}

Note that the dimension of the $i^{th}$ message vector $k_i$ need not necessarily be lesser than the edge vector dimension $n.$ For example, as shown in Fig. \ref{frac_routing}, for the network considered, a (2,2;1)-FNC solution exits which is in fact a fractional routing solution.

\begin{figure}[t]
\centering
\includegraphics[totalheight=1.75 in,width=1in]{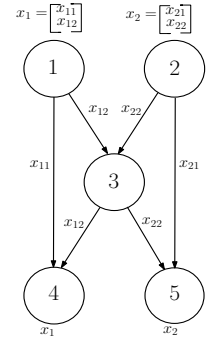}
\caption{A network which admits a $(2,2;1)$-FNC solution}
\label{frac_routing}
\end{figure}   

An FNC solution with \mbox{$k_1=k_2=\dotso=k_m=n=1$} reduces to a scalar solution and an FNC solution for which \mbox{$k_1=k_2=\dotso=k_m=n=k$} reduces to a vector solution of dimension $k.$ 
A solution for which all the local encoding functions and hence the global encoding functions are linear is said to be a linear solution. For a linear $(k_1,k_2,\dotso,k_m;n)$-FNC solution, the global encoding function $\psi_i, i \in \mathcal{E},$ is of the form $\psi_i(x)= x M_i,$ where $M_i$ is an $\sum_{i=1}^{m}k_i \times n$ matrix over $\mathbb{F}_q$ called the global encoding matrix associated with edge $i.$

If a network admits a $(k_1,k_2,\dotso,k_m;n)$-FNC solution, then $(k_1/n,k_2/n,\dotso,k_m/n)$ is said to be an achievable rate vector and the scalar $\frac{1}{m}\sum_{i=1}^m \frac{k_i}{n}$ is said to be an achievable average rate \cite{DoFrZe_FNC}. The closure of the set of all achievable rate vectors is said to be the achievable rate region of the network and the supremum of all achievable average rates is said to be the average coding capacity of the network \cite{DoFrZe_FNC}. 

In this paper, for the sample networks considered, we evaluate the average throughput advantage provided by the FNC solutions which allow different message vector dimensions over the FNC solutions in which the  message vector dimensions are assumed to be the same. Towards this, we call a $(k,k,\dotso,k;n)$ FNC solution to be a symmetric FNC solution and we say that the scalar $k/n$ is a symmetric achievable rate. We define the supremum of all symmetric achievable rates to be the symmetric coding capacity of the network.   
\subsection{Discrete Polymatroids and Matroids}
In this subsection, basic definitions related to discrete polymatroids, matroids and their representability are presented. For a comprehensive treatment of discrete polymatroids and matroids, interested readers are referred to \cite{We,Ox,HeHi,Vl}. For examples illustrating the connection between the vector linear solvability of networks and representations of discrete polymatroids and matroids, see \cite{VvR_ISIT13_arXiv}.
\subsubsection{Discrete Polymatroids}
\begin{definition}[\cite{HeHi}]
Let $\mathbb{D}$ be a non-empty finite set of vectors in $\mathbb{Z}_{\geq 0}^r,$
which contains with each $u \in \mathbb{D}$ all its integral sub-vectors. The set $\mathbb{D}$ is called
a discrete polymatroid on the ground set $\lceil r \rfloor$ if for all $u, v \in \mathbb{D}$ with $\vert u\vert < \vert v\vert,$
there is a vector $w \in  \mathbb{D}$ such that $u < w  \leq  u \vee v.$
\end{definition}

The function $\rho^{\mathbb{D}}: 2^{\lceil r \rfloor} \rightarrow \mathbb{Z}_{\geq 0}$ called the rank function of $\mathbb{D}$ is defined as $\rho^{\mathbb{D}}(A)=\max \{ \vert u(A) \vert , u \in \mathbb{D}\},$ where $\phi \neq A  \subseteq \lceil r \rfloor$ and $\rho^{\mathbb{D}}(\phi)=0.$ In terms of the rank function $\rho^{\mathbb{D}},$ the discrete polymatroid can be written as  $\mathbb{D}=\lbrace x \in \mathbb{Z}_{\geq 0}^r: \vert x(A) \vert \leq \rho^{\mathbb{D}}(A), \forall A \subseteq \lceil r \rfloor \rbrace.$  For simplicity, in the rest of the paper, the rank function of $\mathbb{D}$ is denoted as $\rho.$

From Proposition 4 in \cite{FaFaPa}, it follows that the a function $\rho: 2^{\lceil r \rfloor} \rightarrow \mathbb{Z}_{\geq 0}$ is the rank function of a discrete polymatroid if and only if it satisfies the conditions,
\begin{description}
\item [(D1)]
If $A \subseteq B \subseteq \lceil r \rfloor,$ then $\rho(A)\leq \rho(B).$
\item [(D2)]
 $\forall A,B \subseteq \lceil r \rfloor,$  $\rho(A \cup B) + \rho(A\cap B)\leq\rho(A)+\rho(B).$
\item [(D3)]
$\rho(\phi)=0.$
\end{description}
 A vector $u \in \mathbb{D}$ is a basis vector of $\mathbb{D},$ if $u <v$ for no $v \neq u \in \mathbb{D}.$ The set of basis vectors of $\mathbb{D}$ is denoted as $\mathcal{B}(\mathbb{D}).$ For all $u \in \mathcal{B}(\mathbb{D}),$ $\vert u \vert$ is equal \cite{Vl}, which is called the rank of $\mathbb{D},$ denoted by $rank(\mathbb{D}).$  
%

Let $E$  be a vector space over $\mathbb{F}_q$ and $V_1,V_2,\dotso, V_r$ be finite dimensional vector subspaces of $E.$ Let the mapping $\rho: 2^{\lceil r \rfloor} \rightarrow \mathbb{Z}_{\geq 0}$ be defined as $\rho(X)=dim(\sum_{i \in X} V_i), X \subseteq \lceil r \rfloor.$ It can be verified that $\rho$ satisfies (D1)--(D3) and is the rank function of a discrete polymatroid, denoted by $\mathbb{D}(V_1,V_2,\dotso,V_r).$ 
Note that $\rho$ remains the same even if we replace the vector space $E$ by the sum of the vector subspaces $V_1,V_2,\dotso,V_r.$ In the rest of the paper, the vector subspace $E$ is taken to be the sum of the  vector subspaces $V_1,V_2,\dotso,V_r$ considered. 
The vector subspaces $V_1,V_2,\dotso,V_r$ can be described by a matrix $A=[A_1 \; A_2 \; \dotso A_r ],$ where $A_i, i \in \lceil r \rfloor,$ is a matrix whose columns span $V_i.$
\begin{definition}[\cite{FaFaPa}]
A discrete polymatroid $\mathbb{D}$ is said to be representable over $\mathbb{F}_q$ if there exists vector subspaces $V_1,V_2,\dotso,V_r$ of a vector space $E$ over $\mathbb{F}_q$ such that $dim(\sum_{i \in X} V_i)=\rho(X), \forall X \subseteq \lceil r \rfloor.$ The set of vector subspaces $V_i,i\in\lceil r \rfloor,$ is said to form a representation of $\mathbb{D}.$
\end{definition}



Examples of representable discrete polymatroids are provided in the following two examples.
\begin{example}
Let $A= \underbrace{\hspace{-.3 cm}\left[\begin{matrix}\hspace{.2 cm} 1 \\ \hspace{.2 cm}0 \\\hspace{.2 cm}0 \end{matrix}\right.}_{A_1} \; \underbrace{\begin{matrix} 0\\ 1\\0 \end{matrix}}_{A_2} \; \underbrace{\begin{matrix} 0\\ 0\\1 \end{matrix}}_{A_3}  \; \underbrace{\left.\begin{matrix} 1&0\\ 0&1\\0&1 \end{matrix}\right]}_{A_4}$ be a matrix over $\mathbb{F}_q.$ Let $V_i$ denote the column span of $A_i,$ \mbox{$i \in \lceil 4 \rfloor.$} The rank function $\rho$ of the discrete polymatroid $\mathbb{D}(V_1,V_2,V_3,V_4)$ is as follows: \mbox{$\rho(X)=1, \text{\;if\;} X \in \left\{\{1\},\{2\},\{3\}\right\};$} \mbox{$\rho(X)=2, \text{\;if\;} X \in \left\{\{1,2\},\{1,3\},\{1,4\},\{2,3\},\{4\}\right\}$} and \mbox{$\rho(X)=3 \text{\; otherwise}.$} The set of basis vectors for this discrete polymatroid is given by, {\small$\left\{(0,0,1,2),(0,1,0,2),(0,1,1,1),(1,0,1,1),(1,1,0,1),(1,1,1,0)\right\}.$}
\end{example} 

\begin{example}
Let $A= \underbrace{\hspace{-.0 cm}\left[\begin{matrix} 1 & 0\\ 0 & 1\\0&0\\0&0 \end{matrix}\right.}_{A_1} \; \underbrace{\begin{matrix} 0\\ 0\\1\\0 \end{matrix}}_{A_2} \; \underbrace{\begin{matrix} 0\\ 0\\0\\1 \end{matrix}}_{A_3}  \; \underbrace{\left.\begin{matrix} 1&1\\ 1&0\\1&1\\1&0 \end{matrix}\right.}_{A_4}\; \underbrace{\left.\begin{matrix} 0&0\\ 0&1\\0&1\\1&0 \end{matrix}\right]}_{A_5}$ be a matrix over $\mathbb{F}_q.$ Let $V_i$ denote the column span of $A_i,$ \mbox{$i \in \lceil 5 \rfloor.$} Then the rank function $\rho$ of the discrete polymatroid $\mathbb{D}(V_1,V_2,V_3,V_4,V_5)$ is as follows: \mbox{$\rho(X)=1, \text{\;if\;} X \in \left\{\{2\},\{3\}\right\};$} \mbox{$\rho(X)=2, \text{\;if\;} X \in \left\{\{1\},\{4\},\{5\},\{2,3\},\{3,5\}\right\};$} \mbox{$\rho(X)=3, \text{\;if\;} X \in \left\{\{1,2\},\{1,3\},\{2,4\},\{2,5\},\{3,4\},\{2,3,5\}\right\}$} and  \mbox{$\rho(X)=4, \text{\;otherwise.}$}
\end{example}
\subsubsection{Matroids}
\begin{definition}[\cite{We}]
A matroid is a pair $(\lceil r \rfloor, \mathcal{I}),$ where $\mathcal{I}$ is a collection of  subsets of $\lceil r \rfloor$ satisfying the following three axioms:
\begin{itemize}
\item
$\phi \in \mathcal{I}.$
\item
If $X \in \mathcal{I}$ and $Y \subseteq X,$ then $Y \in \mathcal{I}.$
\item
If $U,V$ are members of $\mathcal{I}$ with $\vert U \vert =\vert V \vert +1$ there exists $x \in U \setminus V$ such that $V \cup x \in \mathcal{I}.$
\end{itemize} 
\end{definition}

A subset of $\lceil r \rfloor$ not belonging to $\mathcal{I}$ is called a dependent set. A maximal independent set is called a basis set and a minimal dependent set is called a circuit. The rank function of a matroid $\Upsilon: 2^{\lceil r \rfloor} \rightarrow \mathbb{Z}_{\geq 0}$ is defined by
$\Upsilon(A)=\max\{\vert X \vert: X \subseteq A, X \in \mathcal{I} \},$ where $A \subseteq \lceil r \rfloor.$ The rank of the matroid $\mathbb{M},$ denoted by $rank(\mathbb{M})$ is equal to $\Upsilon(\lceil r \rfloor).$

 A function $\Upsilon: 2^{\lceil r \rfloor} \rightarrow \mathbb{Z}_{\geq 0}$ is the rank function of a matroid if and only if it satisfies the conditions (D1)--(D3) and the additional condition that $\Upsilon(X)\leq \vert X \vert, \forall X \subseteq \lceil r \rfloor$ (follows from Theorem 3 in Chapter 1.2 in \cite{We}). Since the rank function of $\mathbb{M}$ satisfies (D1)--(D3), it is also the rank function of a discrete polymatroid denoted as $\mathbb{D}(\mathbb{M}).$ In terms of the set of independent vectors $\mathcal{I}$ of $\mathbb{M},$  the discrete polymatroid $\mathbb{D}(\mathbb{M})$ can be written as $\mathbb{D}(\mathbb{M})=\{\sum_{i \in I} \epsilon_{i,r} :I \in \mathcal{I}\}.$ 

A matroid $\mathbb{M}$ is said to be representable over $\mathbb{F}_q$ if there exists one-dimensional vector subspaces $V_1,V_2, \dotso V_r$ of a vector space $E$ such that $\dim(\sum_{i \in X} V_i)=\Upsilon(X), \forall X \subseteq \lceil r \rfloor$ and the set of vector subspaces $V_i, i \in \lceil r  \rfloor,$ is said to form a representation of $\mathbb{M}.$ The one-dimensional vector subspaces $V_i, i \in \lceil r \rfloor,$ can be described by a matrix $A$ over $\mathbb{F}_q$ with $n$ columns whose $i^{\text{th}}$ column spans $V_i.$ 
It is clear that the set of vector subspaces $V_i, i \in \lceil r  \rfloor,$ forms a representation of $\mathbb{M}$ if and only if it forms a representation of $\mathbb{D}(\mathbb{M}).$ 
\section{Linear FNC and  Discrete Polymatroid Representation}
We define a $(k_1,k_2,\dotso,k_m;n)$-discrete polymatroidal network as follows:
\begin{definition}
\label{defn_DPMN}
A network is said to be $(k_1,k_2,\dotso,k_m;n)$-discrete polymatroidal with respect to a discrete polymatroid $\mathbb{D},$ if there exists a map $f: \mathcal{E} \rightarrow \lceil r \rfloor$ which satisfies the following conditions:
\begin{description}
\item [(DN1):]
$f$ is one-to-one on the elements of $\mathcal{S}.$
\item [(DN2):]
\mbox{\small$\sum_{i \in f(\mathcal{S})} k_i \epsilon_{i,r}\in \mathbb{D}.$}
\item [(DN3):]
\mbox{\small$\forall i \in f(\mathcal{S}),$} \mbox{\small$\rho(\{i\})=k_i$} and \mbox{\small$\displaystyle{\max_{i\in f(\mathcal{E}), i \notin f(\mathcal{S})} \rho(\{i\})=n}.$}
\item [(DN4):]
\mbox{\small$\rho(f(In(x)))=\rho(f(In(x) \cup Out(x))), \forall x \in \mathcal{V}.$}
\end{description}
\end{definition}
For a discrete polymatroid $\mathbb{D},$ let $\rho_{max}(\mathbb{D})=\max_{i \in \lceil r \rfloor} \rho(\lbrace i \rbrace).$
\begin{definition}
\label{defn_DPMN1}
A $(k,k\dotso,k;k)$-discrete polymatroidal network with respect to a discrete polymatroid $\mathbb{D},$ with $k=\rho_{max}(\mathbb{D}),$ is said to be discrete polymatroidal with respect to $\mathbb{D}.$
\end{definition}

The notion of a matroidal network was introduced in \cite{DoFrZe}. It can be verified that a network is matroidal with respect to a matroid $\mathbb{M}$ if and only if it is $(1,1,\dotso,1;1)$-discrete polymatroidal with respect to $\mathbb{D}(\mathbb{M}).$ Note that the definition of a discrete polymatroidal network provided in Definition \ref{defn_DPMN1} is equivalent to the definition of a discrete polymatroidal network provided in \cite{VvR_ISIT13_arXiv}.

From the results in \cite{DoFrZe} and \cite{KiMe}, it follows that a network has scalar linear solution over $\mathbb{F}_q$ if and only if the network is matroidal with respect to a matroid representable over $\mathbb{F}_q.$ In \cite{VvR_ISIT13_arXiv}, it was shown that a network has a $k$-dimensional vector linear solution over $\mathbb{F}_q$ if and only if it is discrete polymatroidal with respect to a representable discrete polymatroid with $\rho_{max}(\mathbb{D})=k.$

In the following theorem, we provide a generalization of these results for FNC. 
\begin{theorem}
\label{thm1}
A network has a $(k_1,k_2,\dotso,k_m;n)$-FNC solution over $\mathbb{F}_q,$ if and only if it is $(k_1,k_2,\dotso,k_m;n)$-discrete polymatroidal with respect to a discrete polymatroid  $\mathbb{D}$ representable over $\mathbb{F}_q.$
\begin{proof}
Let the edge set $\mathcal{E}$ of the network be $\lceil l \rfloor$ and let the message set $\mathcal{S}$ be $\lceil m \rfloor.$ The edges are assumed to be arranged in the ancestral ordering which exists since the networks considered are acyclic and the set of intermediate edges in the network is assumed to be $\{m+1,m+2,\dotso l\}.$ We first prove the `if' part of the theorem. Assume that the network considered is $(k_1,k_2,\dotso,k_m;n)$-discrete polymatroidal with respect to a representable discrete polymatroid $\mathbb{D}(V_1,V_2,\dotso,V_r)$ on the ground set $\lceil r \rfloor.$ For brevity, the discrete polymatroid $\mathbb{D}(V_1,V_2,\dotso, V_r)$ is denoted as $\mathbb{D}.$ Let $f$ denote the network-discrete polymatroid mapping. Since, $f$ is one-to-one on the elements of $\mathcal{S},$ assume $f(\mathcal{S})$ to be $\lceil m \rfloor.$

Without loss of generality, the set $\lceil r \rfloor$ can be taken to be the image of the map $f.$ Otherwise, if the image of the map $f$ is $\{i_1,i_2,\dotso i_t\},$ then the network is $(k_1,k_2,\dotso,k_m;n)$-discrete polymatroidal with respect to the discrete polymatroid $\mathbb{D}(V_{i_1},V_{i_2},\dotso,V_{i_t}),$ with the same network-discrete polymatroid mapping $f.$ (DN1), (DN3) and (DN4) follow from the fact that the network is discrete polymatroidal with respect to $\mathbb{D}(V_1,V_2,\dotso,V_r).$ To show that (DN2) is satisfied, it needs to be shown that the vector $u = \sum_{i \in \lceil m \rfloor} k_i \epsilon_{i,t} \in \mathbb{D}(V_{i_1},V_{i_2},\dotso,V_{i_t}).$ Let $v$ denote the vector defined as $\sum_{i \in \lceil m \rfloor} k_i \epsilon_{i,r}.$ Since, the network is discrete polymatroidal with respect to $\mathbb{D}(V_1,V_2,\dotso,V_r),$ from (DN2), we have,
\begin{equation}
\label{eqn_thm1_proof}
\vert v(A) \vert \leq dim\left(\sum_{j \in A} V_j\right), \forall A \subseteq \lceil r \rfloor.
\end{equation}
To show that $u \in \mathbb{D}(V_{i_1},V_{i_2},\dotso,V_{i_t}),$ it needs to be shown that $\vert u (A)\vert \leq dim(\sum_{j \in A} V_{i_j}), \forall A \subseteq \lceil t \rfloor$ which follows from \eqref{eqn_thm1_proof} and from the fact that any subset of $\{i_1,i_2,\dotso,i_t\}$ is also a subset of $\lceil r \rfloor.$ 

Next it will be shown that $dim(\sum_{i=1}^r V_i)=\sum_{i=1}^m k_i.$ Define $s_0=\lceil m \rfloor$ and $s_1=s_0 \cup \{f(m+1)\}.$ Since the edges are arranged in ancestral ordering, we have $In(head(m+1))\subseteq s_0.$ Hence, from (DN4) we have, $\rho(s_1)=dim(\sum_{i \in s_0}V_i+V_{f(m+1)})=dim(\sum_{i \in s_0}V_i)=\rho(s_0).$ Recursively defining $s_i=s_{i-1} \cup f(m+i),$ it can be shown similarly that $\rho(s_i)=\rho(s_0)=\rho(\lceil m \rfloor).$ For $i=l-m,$ we have $s_{l-m}=\lceil r \rfloor$ and $\rho(s_{l-m})=\rho(\lceil r \rfloor)=\rho(\lceil m \rfloor).$ From (DN2), we have $\sum_{i \in \lceil m\rfloor}k_i \epsilon_{i,r} \in \mathbb{D}.$ Hence from the definition of a discrete polymatroid, we have $\sum_{i=1}^m k_i \leq \rho(\lceil m \rfloor).$ From (D2), we have $\rho(\lceil m \rfloor)\leq \rho(\{1\})+\rho(\{2,3,\dotso,m\})\dotso \leq \sum_{i=1}^m \rho(\{i\}).$ Hence, we have $\rho(\lceil m \rfloor)\leq \sum_{i=1}^m k_i,$ since from (DN3) $\rho(\{i\})=k_i,$ for $i \in f(\mathcal{S}).$ As a result $dim(\sum_{i=1}^r V_i)=\rho(\lceil r \rfloor)=\rho(\lceil m\rfloor)=\sum_{i=1}^m k_i.$ The vector subspace $V_i, i \in \lceil r \rfloor, i \notin \lceil m \rfloor$ can be described by a matrix $A_i$ of size $\sum_{i=1}^m k_i \times n$ whose columns span $A_i.$ For $i \in \lceil m \rfloor,$ the vector subspace $V_i$ can be written as the column span of a matrix $A_i$ of size $\sum_{i=1}^m k_i \times k_i.$ Let $B=[A_1 A_2 \dotso A_m].$ Since $dim({\sum_{i=1}^m}V_i)=\sum_{i=1}^m k_i,$ $B$ is invertible and can be taken to be the $\sum_{i=1}^m k_i \times \sum_{i=1}^m k_i$ identity matrix (Otherwise, it is possible to define $A'_i=B^{-1} A_i$ and $V'_i$ to be the column span of $A'_i$ so that $D(V'_1,V'_2,\dotso,V'_r)=D(V_1,V_2,\dotso,V_r)$).

The claim is that taking the global encoding matrix of edge $i$ to be $A_{f(i)}$ forms a $(k_1,k_2,\dotso,k_m;n)$-FNC solution for the network. The proof of the claim is as follows: Since $B$ is an identity matrix, $A_i x=x_i$ for $i \in \lceil m \rfloor$ and hence (N1) is satisfied. For any node $v$ in the network, from (DN4) it follows that $dim(\sum_{i \in In(v) \cup Out(v)}V_{f(i)})= dim(\sum_{i \in In(v)}V_{f(i)}).$ Hence, $\forall j \in Out(v),$ $A_{f(j)}$ can be written as $\sum_{i \in In(V)} W_i A_{f(i)}.$ Hence, (N2) and (N3) are satisfied. This completes the `if' part of the proof.

For the `only if' part of the proof, assume that the network considered admits a $(k_1.k_2,\dotso,k_m;n)$-FNC solution, with $A_i, i \in \lceil l \rfloor,$ being the global encoding matrix associated with edge $i.$ Consider the discrete polymatroid $D(V_1,V_2,\dotso,V_l),$ where $V_i$ denotes the column span of $A_i.$ Let $f(i)=i, i\in \lceil l \rfloor$ be the mapping from the edge set of the network to the ground set of the discrete polymatroid. It can be verified that the network is $(k_1,k_2,\dotso;n)$-discrete polymatroidal with respect to $\mathbb{D}(V_1,V_2,\dotso,V_l).$ 
\end{proof}
\end{theorem}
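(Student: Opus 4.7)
My plan is to prove both directions of the equivalence by building an explicit dictionary: the column span of the global encoding matrix of edge $e$ corresponds to the subspace $V_{f(e)}$ in a representation of $\mathbb{D}$. I would dispatch the ``only if'' direction first, since it is the easier one. Given a linear FNC solution with global encoding matrices $\{A_i\}_{i \in \mathcal{E}}$, I would set $V_i$ to be the column span of $A_i$ and let $f$ be the identity map on $\mathcal{E}$. Then (DN1) and (DN3) follow from the stated column dimensions, (DN2) from the fact that the $m$ message matrices act on independent coordinate blocks, and (DN4) from the local encoding condition (N3), which forces the column span of every outgoing edge to lie inside the sum of column spans of incoming edges.

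For the ``if'' direction I would begin with a representation $\mathbb{D}(V_1,\dotso,V_r)$ and a mapping $f$ satisfying (DN1)--(DN4). After restricting to $f(\mathcal{E}) = \lceil r \rfloor$ (discarding irrelevant subspaces and checking that (DN2) survives since any $A \subseteq f(\mathcal{E})$ is also a subset of $\lceil r \rfloor$), the next step is to pin down the ambient dimension. Using the ancestral ordering on the intermediate edges and applying (DN4) at each node, the sets $s_i = \lceil m \rfloor \cup \{f(m+1),\dotso,f(m+i)\}$ all have the same rank as $\lceil m \rfloor$. A sandwich argument combining (DN2), the subadditivity (D2), and (DN3) then gives $\rho(\lceil m \rfloor) = \sum_{i=1}^{m} k_i$, so $\dim(\sum_{i=1}^{r} V_i) = \sum_{i=1}^{m} k_i$.

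With this dimension count in hand, I would choose bases for the $V_i$ with $i \in \lceil m \rfloor$ so that the concatenated matrix $B = [A_1\,A_2\,\cdots\,A_m]$ is the identity --- replacing each $A_i$ by $B^{-1}A_i$ preserves $\mathbb{D}(V_1,\dotso,V_r)$ (since $B^{-1}$ is an invertible change of coordinates) but normalizes the message columns. Declaring the global encoding matrix of edge $e$ to be $A_{f(e)}$ makes (N1) immediate; (N2) and (N3) follow because (DN4) forces $V_{f(j)} \subseteq \sum_{i \in In(v)} V_{f(i)}$ for every $j \in Out(v)$, which produces the required linear coefficients $W_i$.

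The main obstacle I anticipate is the dimension-collapse step in the ``if'' direction: the inductive walk along the ancestral ordering must apply (DN4) to exactly the right sets $s_i$, and the normalization step requires justifying carefully that the change of basis $A_i \mapsto B^{-1}A_i$ does not alter the polymatroid. Once those two technical points are handled, the remainder reduces to routine bookkeeping that the subspace-containment structure imposed by (DN4) matches the linear-combination structure required by the encoding axioms (N1)--(N3).
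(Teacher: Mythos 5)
Your proposal follows essentially the same route as the paper's own proof: the ``only if'' direction via column spans of the global encoding matrices with the identity mapping, and the ``if'' direction via restriction to the image of $f$, the ancestral-ordering walk with (DN4) to show $\dim(\sum_{i=1}^{r}V_i)=\sum_{i=1}^{m}k_i$, the sandwich argument with (D2) and (DN3), the normalization of $B$ to the identity, and the assignment of $A_{f(e)}$ as the global encoding matrix. The two technical points you flag as the main obstacles are exactly the steps the paper spells out, and your treatment of them matches.
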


The results in \cite{DoFrZe,KiMe} and \cite{VvR_ISIT13_arXiv} on the scalar and vector linear solvability of networks can be obtained as corollaries of Theorem \ref{thm1}, as stated in Corollary 1 and Corollary 2 below. 

\begin{corollary}
A network has a scalar linear solution over $\mathbb{F}_q$ if and only if it is matroidal with respect to a matroid representable over $\mathbb{F}_q.$ 
\end{corollary}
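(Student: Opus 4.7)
The plan is to derive both directions of the corollary by specializing Theorem \ref{thm1} to the scalar case $k_1=k_2=\dotso=k_m=n=1,$ and then translate between the discrete polymatroid and matroid languages using the two equivalences already noted in the preliminaries: (a) a network is matroidal with respect to a matroid $\mathbb{M}$ if and only if it is $(1,1,\dotso,1;1)$-discrete polymatroidal with respect to $\mathbb{D}(\mathbb{M}),$ and (b) a collection of one-dimensional subspaces represents $\mathbb{M}$ if and only if it represents $\mathbb{D}(\mathbb{M}).$

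For the ``if'' direction, suppose the network is matroidal with respect to a matroid $\mathbb{M}$ representable over $\mathbb{F}_q.$ By (a) it is $(1,\dotso,1;1)$-discrete polymatroidal with respect to $\mathbb{D}(\mathbb{M}),$ and by (b) $\mathbb{D}(\mathbb{M})$ is representable over $\mathbb{F}_q.$ Theorem \ref{thm1} then yields a $(1,\dotso,1;1)$-FNC solution, which by definition is a scalar linear solution.

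For the ``only if'' direction, a scalar linear solution is a $(1,\dotso,1;1)$-FNC solution, so Theorem \ref{thm1} produces a representable discrete polymatroid $\mathbb{D}$ on some ground set $\lceil r\rfloor,$ together with a map $f$ satisfying (DN1)--(DN4). Following the reduction used in the proof of Theorem \ref{thm1}, I would first restrict the ground set to the image of $f;$ condition (DN3) with $k_i=n=1$ then forces $\rho(\{i\})\leq 1$ at every element. It remains to show that any such $\mathbb{D}$ coincides with $\mathbb{D}(\mathbb{M})$ for a matroid $\mathbb{M}$ that inherits the same subspace representation, after which (a) converts the discrete polymatroidal structure back into a matroidal one.

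The main technical point is precisely this last step: showing that a representable discrete polymatroid in which every singleton has rank at most $1$ actually arises from a matroid. The plan is to set $\mathcal{I}=\bigl\{I\subseteq\lceil r\rfloor:\sum_{i\in I}\epsilon_{i,r}\in\mathbb{D}\bigr\}$ and verify the three matroid axioms. Closure under subsets is immediate from $\mathbb{D}$ being closed under integral sub-vectors, and the exchange axiom reduces to the polymatroid exchange property: if $u,v\in\mathbb{D}$ are the indicator vectors of two members of $\mathcal{I}$ with $|u|<|v|,$ then $u\vee v$ is $0/1$-valued because every singleton rank is at most $1,$ so the augmenting vector $w$ satisfying $u<w\leq u\vee v$ is forced to have the form $u+\epsilon_{j,r}$ with $j\in V\setminus U.$ The resulting matroid $\mathbb{M}$ has the same rank function as $\mathbb{D},$ and hence the representation of $\mathbb{D}$ is simultaneously a representation of $\mathbb{M}$ by (b), completing the translation.
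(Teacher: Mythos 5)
Your proposal is correct and follows exactly the route the paper intends: the paper offers no explicit proof of this corollary, merely asserting that it follows from Theorem \ref{thm1} together with the two equivalences you cite as (a) and (b), and your specialization to $k_1=\dotso=k_m=n=1$ plus the verification that a representable discrete polymatroid with all singleton ranks at most $1$ equals $\mathbb{D}(\mathbb{M})$ for a matroid $\mathbb{M}$ supplies precisely the detail the paper leaves unstated. One small wording fix: the augmenting vector $w$ with $u<w\leq u\vee v$ need not itself equal $u+\epsilon_{j,r}$ (it may be any $0/1$ vector strictly above $u$ and below $u\vee v$); rather, picking any $j$ in the support of $w-u\subseteq U\setminus V$ and using closure of $\mathbb{D}$ under integral sub-vectors gives $u+\epsilon_{j,r}\in\mathbb{D},$ which is what the exchange axiom requires.
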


\begin{corollary}
A network has a vector linear solution of dimension $k$ over $\mathbb{F}_q$ if and only if it is discrete polymatroidal with respect to a discrete polymatroid $\mathbb{D}$ representable over $\mathbb{F}_q,$ with $\rho_{max}(\mathbb{D})=k.$  
\end{corollary}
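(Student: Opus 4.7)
The plan is to derive this corollary by specializing Theorem \ref{thm1} to the symmetric case $k_1 = k_2 = \dotsb = k_m = n = k$ and then reconciling the result with Definition \ref{defn_DPMN1}, which requires the additional constraint $\rho_{max}(\mathbb{D}) = k$.

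For the `only if' direction, I would start from a vector linear solution of dimension $k$, which by definition is a $(k,k,\dotso,k;k)$-FNC solution over $\mathbb{F}_q$. Theorem \ref{thm1} then immediately produces a representable discrete polymatroid $\mathbb{D}$ on some ground set $\lceil r \rfloor$ and a map $f$ making the network $(k,k,\dotso,k;k)$-discrete polymatroidal with respect to $\mathbb{D}$. From (DN3), $\rho(\{i\}) = k$ for every $i \in f(\mathcal{S})$ and the maximum of $\rho(\{i\})$ over the non-message image coordinates equals $n = k$, so $\rho_{max}(\mathbb{D}) \geq k$. To get equality, I would invoke the same restriction argument used inside the proof of Theorem \ref{thm1}: replace $\mathbb{D}$ by the discrete polymatroid represented by only those vector subspaces indexed by the image of $f$, so that every coordinate of the ground set arises from the column span of a global encoding matrix. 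Since each such global encoding matrix has $k$ columns, every $V_i$ has dimension at most $k$, and hence $\rho_{max}(\mathbb{D}) = k$. By Definition \ref{defn_DPMN1}, the network is then discrete polymatroidal with respect to $\mathbb{D}$.

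For the `if' direction, the implication is essentially immediate: a network that is discrete polymatroidal with respect to a representable $\mathbb{D}$ with $\rho_{max}(\mathbb{D}) = k$ is, by Definition \ref{defn_DPMN1}, $(k,k,\dotso,k;k)$-discrete polymatroidal with respect to $\mathbb{D}$, so Theorem \ref{thm1} yields a $(k,k,\dotso,k;k)$-FNC solution over $\mathbb{F}_q$, which is nothing but a $k$-dimensional vector linear solution.

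The only genuinely delicate point is ensuring in the forward direction that $\rho_{max}(\mathbb{D}) = k$ rather than just $\geq k$: a priori Theorem \ref{thm1} could furnish a polymatroid whose ground set contains spurious coordinates of higher rank. I expect this to be the main (mild) obstacle, and I would handle it exactly as above by trimming the ground set to the image of $f$ and appealing to the fact that, in a vector linear solution, every global encoding matrix has column rank at most $k$.
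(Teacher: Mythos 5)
Your proposal is correct and matches the paper's intent exactly: the paper states this corollary without a separate proof, treating it as the immediate specialization of Theorem~\ref{thm1} to $k_1=\dotsb=k_m=n=k$ combined with Definition~\ref{defn_DPMN1}. Your extra care in the forward direction --- trimming the ground set to the image of $f$ so that every $V_i$ is the column span of a $k$-column global encoding matrix, forcing $\rho_{max}(\mathbb{D})=k$ rather than merely $\geq k$ --- is precisely the step the paper leaves implicit (it is the same ``without loss of generality'' restriction already used inside the proof of Theorem~\ref{thm1}), and you handle it correctly.
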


The result in Theorem 1 is illustrated in the following two examples.
\begin{example}
\begin{figure}[t]
\centering
\includegraphics[totalheight=3 in,width=2in]{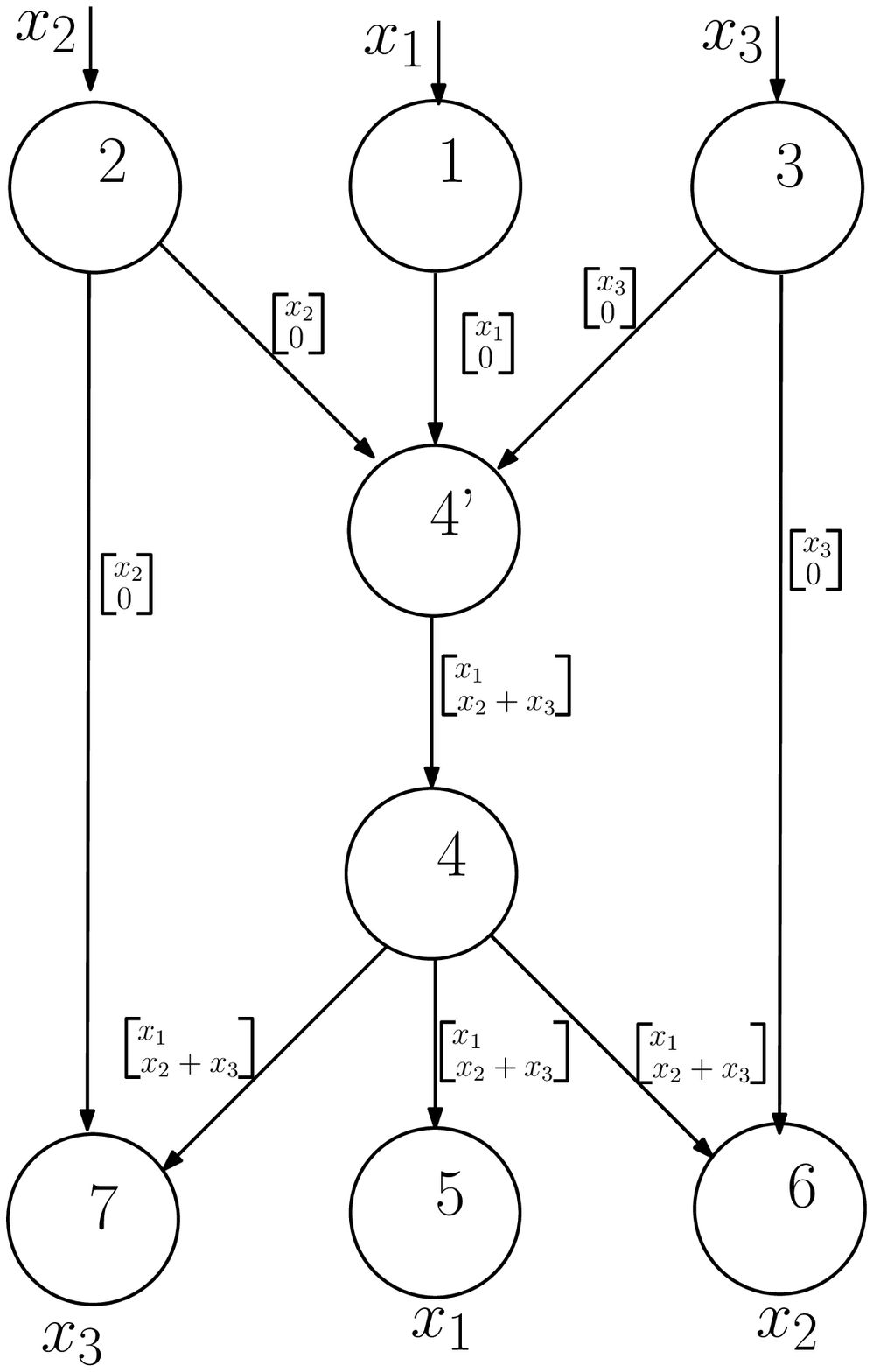}
\caption{A network for which scalar and vector solutions do not exist but an FNC solution exists}
\label{fnc_example}
\end{figure}
Consider the network given in Fig. \ref{fnc_example}. This network admits a linear $(1,1,1;2)$-FNC solution shown in Fig. \ref{fnc_example}. Consider the representable discrete polymatroid $\mathbb{D}(V_1,V_2,V_3,V_4)$ defined in Example 1.  As explained below, the network shown in Fig. \ref{fnc_example} is $(1,1,1;2)$-discrete polymatroidal with respect to the discrete polymatroid $\mathbb{D}(V_1,V_2,V_3,V_4),$ with the network-discrete polymatroid mapping $f$ defined as follows: all the incoming and outgoing edges of node $i, i \in \{1,2,3,4\}$ are mapped on to the ground set element $i$ of the discrete polymatroid $\mathbb{D}(V_1,V_2,V_3,V_4).$ 
\begin{itemize}
\item
Clearly, $f$ is one-to-one on the elements of $\mathcal{S}$ and hence (DN1) is satisfied.  
\item
From Example 1, it follows that the vector \mbox{$\displaystyle{\sum_{i \in \{1,2,3\}} k_i \epsilon_{i,4}=(1,1,1,0)}$} is a basis vector for $\mathbb{D}(V_1,V_2,V_3,V_4).$ Hence, $(1,1,1,0) \in \mathbb{D}(V_1,V_2,V_3,V_4)$ and (DN2) is satisfied. 
\item
From Example 1, it can be seen that $\rho(\{1\})=\rho(\{2\})=\rho(\{3\})=1$ and \mbox{$\displaystyle{\max_{i\in \lceil 4 \rfloor, i \notin \{1,2,3\}} \rho(\{i\})}=\rho(\{4\})=2.$} Hence, (DN3) is satisfied. 
\item
\begin{itemize}
\item
For \mbox{$x \in \{1,2,3,4\},$} since $f(In(x))=f(In(x) \cup Out(x)),$ $\rho(f(In(x)))=\rho(f(In(x)\cup Out(x))).$
\item
For node 4', we have $f(In(4'))=\{1,2,3\}$ and $f(In(4') \cup Out(4'))=\{1,2,3,4\}.$ 
From Example 1, it follows that $\rho(\{1,2,3\})=\rho(\{1,2,3,4\})=3.$
\item
For node 5, we have,$f(In(5))=\{4\}$ and $f(In(5)\cup Out(5))=\{1,4\}.$ From Example 1, it follows that $\rho(\{4\})=\rho(\{1,4\})=2.$ 
\item
For node 6, we have $f(In(6))=\{3,4\}$ and $f(In(6)\cup Out(6))=\{2,3,4\}.$ From Example 1, it can be seen that $\rho(\{3,4\})=\rho(\{2,3,4\})=3.$ 
\item
For node 7, we have $f(In(7))=\{2,4\}$ and $f(In(7)\cup Out(7))=\{2,3,4\}.$ From Example 1, it follows that $\rho(\{2,4\})=\rho(\{2,3,4\})=3.$ 
\end{itemize}
For all the nodes $x$ in the network, $\rho(f(In(x)))=\rho(f(In(x)\cup Out(x))).$ Hence (DN4) is satisfied.
\end{itemize}

The network shown in Fig. \ref{fnc_example} has the properties listed in the following lemma.

\begin{lemma}
The network shown in Fig. \ref{fnc_example} has the following properties:
\begin{enumerate}
\item
The network shown in Fig. \ref{fnc_example} does not admit any scalar or vector solution.
\item
The symmetric coding capacity of the network shown in Fig. \ref{fnc_example} is equal to 1/2. Hence, the (1,1,1;2)-FNC solution provided in Fig. \ref{fnc_example}, which is a symmetric FNC solution, achieves the symmetric coding capacity.
\end{enumerate}
\end{lemma}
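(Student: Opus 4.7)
The plan is to prove both parts by an information--theoretic cut argument around the unique outgoing edge of node $4'$ (call it $Y_4$), exploiting the fact that every decoder must reconstruct one of the three messages while only node $5$ sees $Y_4$ in isolation. Treat the three source messages $x_1,x_2,x_3$ as independent and uniform over $\mathbb{F}_q^k$ in any symmetric $(k,k,k;n)$--FNC solution, so that $H(x_1,x_2,x_3)=3k$ and $H(Y_4)\le n$. The two key structural observations are (i) edge $Y_4$ is a deterministic function of $(x_1,x_2,x_3)$, and (ii) the extra input that node $6$ (resp.\ node $7$) receives originates at node $3$ (resp.\ node $2$), and is therefore a deterministic function of $x_3$ alone (resp.\ $x_2$ alone).

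From the three decoders I would extract the identities
\begin{align*}
H(x_1\mid Y_4) &= 0,\\
H(x_2\mid x_3,Y_4) &= 0,\\
H(x_3\mid x_2,Y_4) &= 0,
\end{align*}
where the last two use observation (ii) to replace the auxiliary edge content with the corresponding message. Applying the chain rule in the order $(x_2,x_3,x_1)$ then yields
\[
H(x_1,x_2,x_3\mid Y_4)=H(x_2\mid Y_4)+H(x_3\mid x_2,Y_4)+H(x_1\mid x_1,x_2,Y_4)=H(x_2\mid Y_4).
\]
Combining this with $H(Y_4\mid x_1,x_2,x_3)=0$ and evaluating $H(x_1,x_2,x_3,Y_4)$ two ways gives the equality $H(Y_4)+H(x_2\mid Y_4)=3k$, and since $H(x_2\mid Y_4)\le H(x_2)=k$ we conclude $n\ge H(Y_4)\ge 2k$.

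The two parts of the lemma fall out immediately: a scalar solution is the symmetric case $k=n=1$, which violates $n\ge 2k$; a vector solution of dimension $k\ge 1$ is the symmetric case $n=k$, which likewise violates $n\ge 2k$. For the capacity claim, the same bound forces $k/n\le 1/2$ for every symmetric $(k,k,k;n)$--FNC solution, while the $(1,1,1;2)$--FNC solution displayed in Fig.~\ref{fnc_example} achieves $k/n=1/2$, so $1/2$ is attained and hence equals the symmetric coding capacity.

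I expect the main obstacle to be isolating the right chain--rule decomposition: a naive cut bound on $Y_4$ only yields the trivial inequality $n\ge k$, and simply appealing to sub--additivity of conditional entropy (e.g.\ bounding $H(x_1,x_2,x_3\mid Y_4)$ by $\sum_i H(x_i\mid Y_4)$) still falls short of the factor $2$. The crucial observation is that the ordering $(x_2,x_3,x_1)$ collapses \emph{two} of the three conditional--entropy summands to zero simultaneously, producing the tight equality $H(Y_4)=3k-H(x_2\mid Y_4)$; once this identity is in hand, the rest of the argument is formal.
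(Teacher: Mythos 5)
Your proof is correct, and it reaches the same bottleneck as the paper --- the single edge from node $4'$ to node $4$ --- but it formalizes the argument differently. The paper's proof is an informal dimension count: it asserts that $k$ of the $n$ coordinates on that edge ``should carry $x_1$'' and that the remaining $n-k$ coordinates must serve the other two sinks, whence $n-k\ge k$; this picture of splitting the edge content into an $x_1$-part and a residual part is intuitive but is really only self-evident for linear codes. Your entropy version closes that gap: from $H(x_1\mid Y_4)=0$, $H(x_2\mid x_3,Y_4)=0$, $H(x_3\mid x_2,Y_4)=0$ (the last two using that the side inputs at the two sinks are functions of $x_3$ and $x_2$ alone) together with $H(Y_4\mid x_1,x_2,x_3)=0$, the chain rule gives $H(Y_4)=3k-H(x_2\mid Y_4)\ge 2k$, hence $n\ge 2k$ for \emph{arbitrary}, not necessarily linear, symmetric solutions. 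You also obtain part 1 as an immediate corollary (scalar and vector solutions are the cases $n=k$, which violate $n\ge 2k$), whereas the paper argues part 1 separately and just as informally. Two small blemishes: the displayed chain-rule term $H(x_1\mid x_1,x_2,Y_4)$ should read $H(x_1\mid x_2,x_3,Y_4)$ (it is zero either way, so nothing breaks), and you should state explicitly that imposing the uniform product distribution on the messages is legitimate because a solution must decode correctly for every realization. With those cosmetic fixes your argument is, if anything, tighter than the one in the paper.
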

\begin{proof}
1) To satisfy the demand of node 7, the edge from 4' to 4 has to carry $x_1,$ which would mean that the demands of the nodes 5 and 6 cannot be met. Hence, the network shown in Fig. \ref{fnc_example} does not admit any scalar and vector solution. 

2) Every $(k,k,k;n)$-FNC solution for this network should satisfy the condition that $\frac{k}{n}\leq \frac{1}{2}.$ The reason for this is as follows: $k$ out of $n$ dimensions of the vector flowing in the edge joining 4' and 4 should carry $x_1$ to satisfy the demand of node 7. The demands of node 5 and node 6 should be met by what is carried in the remaining $n-k$ dimensions. Hence, $n-k$ should be at least $k$ to be able to satisfy the demands of nodes 5 and 6. 
\end{proof} 
\end{example}

In the previous example, a symmetric FNC solution was provided. 
In the next example, we provide a network with a non-symmetric FNC solution and for which the average rate achieved by the FNC solution provided is greater than the symmetric coding capacity.
\begin{example}
\begin{figure}[t]
\centering
\includegraphics[totalheight=2 in,width=3in]{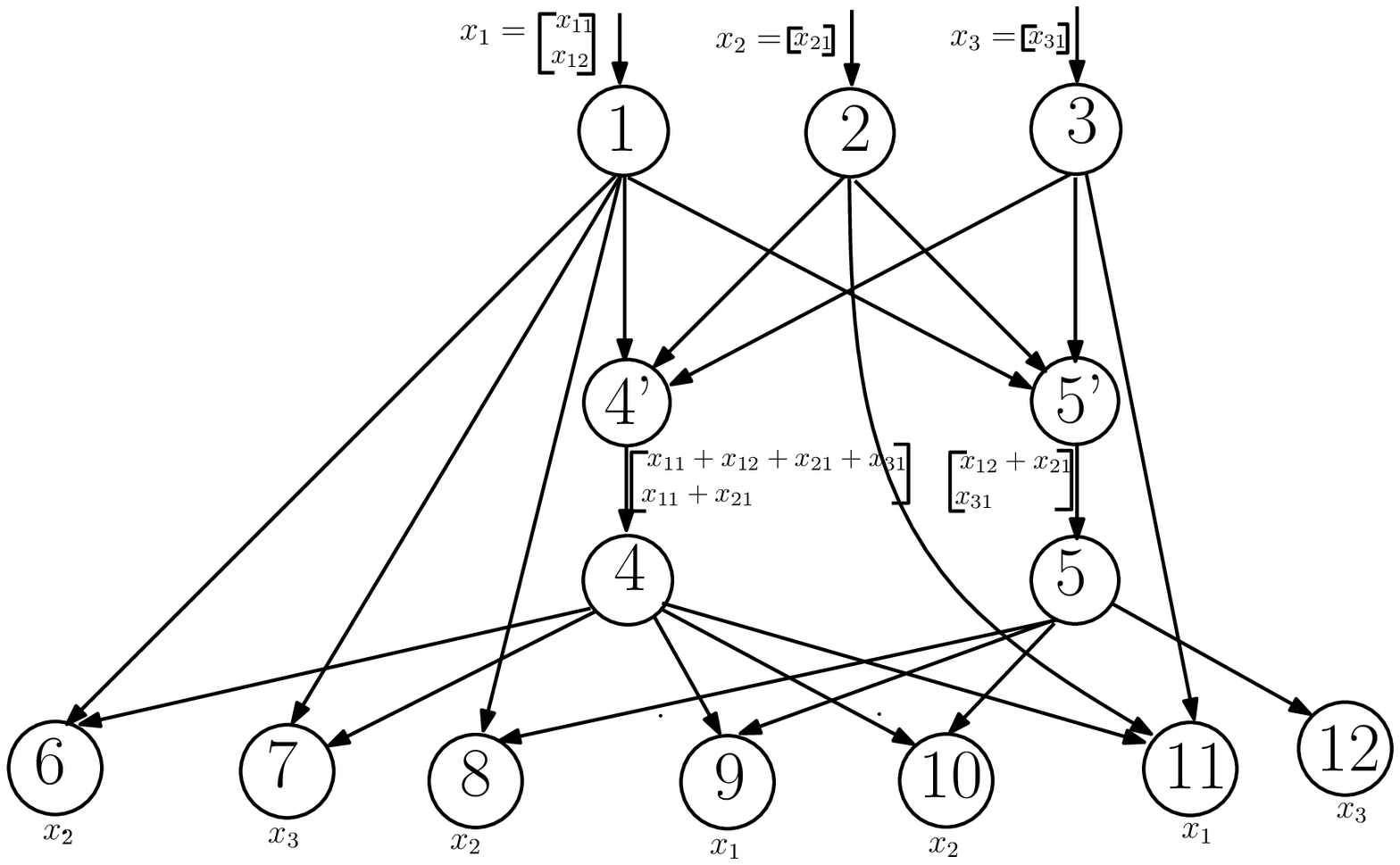}
\caption{A network for which scalar and vector solutions do not exist but an FNC solution exists}
\label{mfnc_example}
\end{figure}
Consider the network given in Fig. \ref{mfnc_example}. A linear (2,1,1;2)-FNC solution for this network is shown in Fig. \ref{mfnc_example}. All the outgoing edges of a node which has only one incoming edge, are assumed to carry the same vector as that of the incoming edge. Consider the discrete polymatroid $D(V_1,V_2,V_3,V_4,V_5)$ defined in Example 2. It can be verified that the network shown in Fig. \ref{mfnc_example} is (2,1,1;2)-discrete polymatroidal with respect to the discrete polymatroid $D(V_1,V_2,V_3,V_4,V_5)$ with the network-discrete polymatroid mapping $f$ defined as follows: all the incoming and outgoing edges of node $i, i \in \{1,2,3,4,5\}$ are mapped on to the ground set element $i$ of the discrete polymatroid $\mathbb{D}(V_1,V_2,V_3,V_4,V_5).$

Lemma 2 below lists some of the properties of the network given in Fig. \ref{mfnc_example}.

\begin{lemma}
The network given in Fig. \ref{mfnc_example} has the following properties:
\begin{enumerate}
\item
The network in Fig. \ref{mfnc_example} does not admit any scalar or vector solution.
\item
The symmetric coding capacity of the network in Fig. \ref{mfnc_example} is 1/2. Hence the (2,1,1;2)-FNC solution provided in Fig. \ref{mfnc_example} achieves an average rate of 2/3 which is strictly greater than the maximum average rate of 1/2 achievable using symmetric FNC. 
\end{enumerate}
\begin{proof}
 1) To deliver message $x_3$ to node 10, the edge connecting nodes 5' and 5 needs to carry $x_3.$ In that case, message $x_2$ cannot be delivered to node 7, since the only path from node 2 which generates $x_2$ to node 7 contains the edge joining 5' and 5. Hence, the network in Fig. \ref{mfnc_example} does not admit any scalar or vector solution.
 
 2) For any $(k,k,k;n)$-FNC solution, $\frac{k}{n}$ cannot exceed $\frac{1}{2}.$ The reason is as follows: $k$ dimensions of the vector transmitted from 5' to 5 should carry $x_3$ and to ensure that node 7 gets $x_2,$ $n-k$ should be at least $k,$ i.e, $\frac{k}{n}\leq \frac{1}{2}.$ The linear $(2,1,1;2)$-FNC solution shown in Fig. \ref{fnc_example} achieves an average rate of 2/3, which is greater than the maximum average rate of 1/2 achievable using a symmetric FNC solution.  
\end{proof}
\end{lemma}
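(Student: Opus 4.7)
The strategy for both parts rests on identifying a single bottleneck edge --- the edge joining node $5'$ to node $5$ --- which, by inspection of the figure, lies on the unique path from source $3$ (generating $x_3$) to node $10$ (demanding $x_3$) and simultaneously on the unique path from source $2$ (generating $x_2$) to node $7$ (demanding $x_2$). Once this cut property is recorded, both claims reduce to an information-theoretic cut-set count applied to that edge.

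For Part 1, suppose a scalar or vector solution of common dimension $k$ exists; then the vector on edge $(5',5)$ takes values in $\mathbb{F}_q^k$ and must, by decodability at nodes $7$ and $10$, determine both the $k$-dimensional $x_2$ and the $k$-dimensional $x_3$. Because $x_2$ and $x_3$ are independent sources, their joint alphabet has $q^{2k}$ elements, while the edge vector has only $q^k$: impossible for any $k\ge 1$. Hence no scalar solution ($k=1$) and no vector solution of any dimension exists.

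For Part 2, the identical cut argument on an arbitrary symmetric $(k,k,k;n)$-FNC solution yields $q^n \ge q^{2k}$, i.e.\ $k/n \le 1/2$, giving the upper bound on the symmetric coding capacity. For the matching achievability, I would take the displayed $(2,1,1;2)$-FNC scheme and specialize it to a $(1,1,1;2)$-FNC scheme by restricting $x_1$ to one coordinate (setting the other coordinate to zero); every decoding equation of the original solution still holds, and the specialized scheme realizes the symmetric rate $1/2$. Combining the two bounds gives symmetric capacity exactly $1/2$, and the final claim of the lemma is the arithmetic $\tfrac{1}{3}\bigl(\tfrac{2}{2}+\tfrac{1}{2}+\tfrac{1}{2}\bigr)=\tfrac{2}{3}>\tfrac{1}{2}$.

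The main obstacle is the graph-theoretic verification that edge $(5',5)$ really is the unique cut separating $\{2,3\}$ from $\{7,10\}$ --- this must be extracted from the figure, and one must rule out any alternative path that could leak information about $x_2$ or $x_3$ around the bottleneck. Once that topological fact is pinned down, the remainder is the standard cut-set bound together with the routine check that the displayed $(2,1,1;2)$-FNC solution restricts cleanly to a $(1,1,1;2)$-FNC solution.
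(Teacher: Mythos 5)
Your proof follows the same route as the paper: both parts rest on the observation that the edge $(5',5)$ is a bottleneck that must simultaneously convey $x_3$ to node $10$ and $x_2$ to node $7$, giving the counting bound $n \ge 2k$ for any symmetric $(k,k,k;n)$ solution and in particular ruling out $k=n$ (scalar and vector solutions). You additionally supply the achievability half of the symmetric-capacity claim --- restricting the displayed $(2,1,1;2)$-FNC solution to a $(1,1,1;2)$-FNC solution by fixing one coordinate of $x_1$ --- which the paper's proof leaves implicit; that is a small but genuine improvement in completeness.
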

\end{example}

The networks in Example 3 and Example 4 have been constructed from discrete polymatroids using the algorithm provided in the next subsection.
\section{Construction of Networks from Discrete Polymatroids}
In this section, we extend the algorithm provided in \cite{VvR_ISIT13_arXiv} to construct networks from discrete polymatroids. The network constructed admits a linear FNC solution over $\mathbb{F}_q,$ if the discrete polymatroid from which it was constructed is representable over $\mathbb{F}_q.$ Before presenting the algorithm, some useful definitions are provided.
\begin{definition}
For a discrete polymatroid $\mathbb{D},$ a vector $u \in \mathbb{Z}_{\geq 0}^{r}$ is said to be an excluded vector if the $i^{\text{th}}$ component of $u$ is less than or equal to $\rho(\lbrace i \rbrace), \forall i \in \lceil r \rfloor,$ and $u \notin \mathbb{D}.$
\end{definition}

For a discrete polymatroid $\mathbb{D},$ let $\mathcal{D}(\mathbb{D})$ denote the set of excluded vectors. 

For a vector $u \in \mathbb{Z}_{\geq 0}^{r},$ let $(u)_{>0}$ denotes the set of indices corresponding to the non-zero components of $u.$


Let  $\mathcal{D}_i(\mathbb{D}), i \in \lceil r \rfloor$ denote the set of excluded vectors whose $i^{\text{th}}$ component is 1. 

Let $\mathcal{C}_i(\mathbb{D}), i \in \lceil r \rfloor$ denote the set of vectors $u \in \mathcal{D}_i(\mathbb{D})$ which satisfy the following three conditions:
\begin{enumerate}
\item
$u-\epsilon_{i,r} \in \mathbb{D}.$
\item
There does not exist $ v \neq u \in \mathcal{D}_i(\mathbb{D})$ for which  $v < u.$ 
\item
$(v)_{>0} \not \subset (u)_{>0},$ for all $v \neq u \in \mathcal{D}_i(\mathbb{D}).$ 
\end{enumerate} 

\begin{example}
For the discrete polymatroid considered in Example 1, the set of vectors $\mathcal{D}_i(\mathbb{D}),i\in\lceil 4 \rfloor,$ are as given below:

{\vspace{-.5 cm}
\footnotesize
\begin{align*}
&\mathcal{D}_1(\mathbb{D})=\lbrace (1,0,0,2),(1,0,1,2),(1,1,0,2),(1,1,1,1),(1,1,1,2),\rbrace\\
&\mathcal{D}_2(\mathbb{D})=\lbrace (0,1,1,2),(1,1,0,2),(1,1,1,1),(1,1,1,2),\rbrace\\
&\mathcal{D}_3(\mathbb{D})=\lbrace (0,1,1,2),(1,0,1,2),(1,1,1,1),(1,1,1,2),\rbrace\\
&\mathcal{D}_4(\mathbb{D})=\lbrace (1,1,1,1)\rbrace.\\
\end{align*}
\vspace{-.5 cm}}

\noindent The set of vectors \mbox{$\mathcal{C}_i(\mathbb{D}),$} \mbox{$i \in \lceil 4 \rfloor,$} are given by \mbox{$\mathcal{C}_1(\mathbb{D})=\lbrace (1,0,0,2)\rbrace,$}
\mbox{$\mathcal{C}_2(\mathbb{D})=\lbrace (0,1,1,2)\rbrace,$}
\mbox{$\mathcal{C}_3(\mathbb{D})=\lbrace (0,1,1,2)\rbrace$} and
\mbox{$\mathcal{C}_4(\mathbb{D})=\lbrace (1,1,1,1)\rbrace.$}
\end{example}

\begin{example}
For the discrete polymatroid considered in Example 2, it can be verified that the sets $\mathcal{C}_i(\mathbb{D}),i \in \lceil 5 \rfloor$ are given by,
 \mbox{\small$\mathcal{C}_1(\mathbb{D})=\lbrace (1,0,0,2,2),(1,1,1,2,0)\rbrace,$}
\mbox{\small$\mathcal{C}_2(\mathbb{D})=\lbrace (0,1,0,2,2),(2,1,0,0,2),(2,1,0,2,0)\rbrace,$}
\mbox{\small$\mathcal{C}_3(\mathbb{D})=\lbrace (2,0,1,2,0),(0,0,1,0,2)\rbrace,$}
\mbox{\small$\mathcal{C}_4(\mathbb{D})=\lbrace (0,0,1,1,2),(2,1,1,1,0),\rbrace$} and
\mbox{\small$\mathcal{C}_5(\mathbb{D})=\lbrace (0,0,1,2,1),(2,0,0,2,1),(2,1,1,0,1)\rbrace.$}
\end{example}

The algorithm useful towards constructing networks from Discrete polymatroids is as follows:\\
\underline{\textbf{{ALGORITHM 1}}}\\
\underline{\emph{Step 1:}}
 Choose a basis vector $b \in \mathcal{B}(\mathbb{D})$ given by $\sum_{i \in (b)_{>0}} k_i \epsilon_{i,r}$ which satisfies the condition that $\rho(\{i\})=k_i, \forall i \in (b)_{>0}.$ For every $i \in (b)_{>0},$ add a node $i$ to the network with an input edge $e_i$ which generates the message $x_i.$ Let $f(e_i)=i.$ Define $M=T=(b)_{>0}.$\\
\underline{\emph{Step 2:}}
For $i \in \lceil r \rfloor \notin T,$ find a vector $u \in \mathcal{C}_i(\mathbb{D}),$ for which \mbox{$(u-\epsilon_{i,r})_{>0} \subseteq T.$} Add a new node $i'$ to the network with incoming edges from all the nodes which belong to $(u-\epsilon_{i,r})_{>0}.$ Also, add a node $i$ with a single incoming edge from $i',$ denoted as $e_{i',i}.$ Define $f(e)=head(e), \forall e \in In(i)$ and $f(e_{i',i})=i.$ Let $T=T \cup \lbrace i\rbrace.$
Repeat step 2 until it is no longer possible.\\
\underline{\emph{Step 3:}} 
For $i \in M,$ choose a vector $u$ from $\mathcal{C}_i(\mathbb{D})$ for which $(u)_{>0} \subseteq T.$ Add a new node $h$ to the network which demands message $x_i$ and which has connections from the nodes in $(u-\epsilon_{i,r})_{>0}.$ Define $f(e)=head(e), \forall e \in In(h).$  Repeat this step as many number of times as desired. 
%

Theorem 2 below establishes the connection between the network constructed using Algorithm 1 and the discrete polymatroid from which the network was constructed, for a discrete polymatroid representable over $\mathbb{F}_q.$

For a basis vector \mbox{$b \in \mathcal{B}(\mathbb{D}),$} define \mbox{$\displaystyle{\phi(b)=\max_{i \notin (b)_{>0}, i \in \lceil r \rfloor} \rho(\{i\})}.$}
\begin{theorem}
A network constructed using ALGORITHM 1 from a discrete polymatroid $\mathbb{D}$  which is representable over $\mathbb{F}_q,$ with the basis vector $b$ given by $\sum_{i \in (b)_{>0}} k_i \epsilon_{i,r}$ chosen in Step 1 and $\phi(b)=n,$  admits a linear $(k_1,k_2,\dotso,k_m;n)$-FNC solution over $\mathbb{F}_q.$
\begin{proof}
The proof of the theorem is given by showing that the constructed network is $(k_1,k_2,\dotso,k_m;n)$-discrete polymatroidal with respect to the representable discrete polymatroid $D$ from which it is constructed. The satisfaction of (DN1) is ensured by step 1 of the construction procedure. Since the vector $\sum_{i\in \mathcal{S}} k_i \epsilon_{i,r}$ belongs to $\mathcal{B}(\mathbb{D}),$ it belongs to $\mathbb{D}$ as well and hence (DN2) is satisfied. Also, since $\rho(\{i\})=i, \forall i \in (b)_{>0}$ and $\phi(b)=n,$ (DN3) is satisfied.

The nodes in the network constructed using Algorithm 1 are of four kinds (i) node $i,$ $i \in M,$ which are added in step 1 (ii) node $i', i \in \lceil r \rfloor \setminus M,$ (iii) node $i, i \in \lceil r \rfloor \setminus M,$ added in Step 2. (iv) nodes added in Step 3 which demand messages. Following a similar approach as in the proof of Theorem 2 in \cite{VvR_ISIT13_arXiv}, it can be shown that for a node $x$ which belongs to any one of the four kinds, $\rho(f(In(x)))=\rho(f(In(x)\cup Out(x)))$ and hence (DN4) is satisfied.
\end{proof}
\end{theorem}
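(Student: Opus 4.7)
The plan is to invoke Theorem~\ref{thm1} and reduce the problem to verifying that the network output by ALGORITHM 1 is $(k_1,k_2,\dotso,k_m;n)$-discrete polymatroidal with respect to $\mathbb{D}$, via the map $f$ defined during the construction. Since $\mathbb{D}$ is assumed to be representable over $\mathbb{F}_q$, Theorem~\ref{thm1} will then immediately yield the required linear FNC solution. So the whole proof is an exercise in checking (DN1)--(DN4).

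Conditions (DN1)--(DN3) are essentially book-keeping: Step~1 introduces one input edge $e_i$ per element $i\in(b)_{>0}$ with $f(e_i)=i$, which gives (DN1); the chosen basis vector $b=\sum_{i\in(b)_{>0}} k_i\epsilon_{i,r}\in\mathcal{B}(\mathbb{D})\subseteq\mathbb{D}$ certifies (DN2); the Step~1 requirement $\rho(\{i\})=k_i$ for $i\in(b)_{>0}$, together with the hypothesis $\phi(b)=n$ and the fact that every $i\in f(\mathcal{E})\setminus f(\mathcal{S})$ lies in $\lceil r\rfloor\setminus(b)_{>0}$, gives (DN3). All of this is immediate from the construction.

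The substantive step is (DN4), and it is where the definition of $\mathcal{C}_i(\mathbb{D})$ does its work. I would classify the nodes of the constructed network into four types mirroring the algorithm: (i) source nodes $i\in M$ from Step~1; (ii) the aggregator nodes $i'$ introduced in Step~2; (iii) the forwarding nodes $i$ introduced in Step~2; (iv) sink nodes added in Step~3. For types (i) and (iii), $f$ sends every incoming and outgoing edge to the same ground-set element, so (DN4) is trivial. For types (ii) and (iv), $f(In(x))=(u-\epsilon_{i,r})_{>0}$ and $f(In(x)\cup Out(x))=(u)_{>0}=(u-\epsilon_{i,r})_{>0}\cup\{i\}$ for the chosen $u\in\mathcal{C}_i(\mathbb{D})$, so the desired equality reduces to the key lemma
\[
\rho\bigl((u)_{>0}\bigr)=\rho\bigl((u-\epsilon_{i,r})_{>0}\bigr),\quad u\in\mathcal{C}_i(\mathbb{D}).
\]

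The crux, and what I expect to be the main obstacle, is proving this lemma from the three defining properties of $\mathcal{C}_i(\mathbb{D})$. My approach: let $A'=(u-\epsilon_{i,r})_{>0}$ and $A=A'\cup\{i\}$. From $u-\epsilon_{i,r}\in\mathbb{D}$ applied to $A'$ we obtain $\rho(A')\ge|u|-1$. From $u\notin\mathbb{D}$ there is some $B\ni i$ with $|u(B)|=\rho(B)+1$; then $B\cap A\subseteq A$ still satisfies $|u(B\cap A)|\ge|u(B)|>\rho(B\cap A)$, so the restriction of $u$ to $B\cap A$ is itself an excluded vector in $\mathcal{D}_i(\mathbb{D})$. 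Condition~(3) in the definition of $\mathcal{C}_i(\mathbb{D})$ forbids a proper-subset support, forcing $B\cap A=A$, whence $|u|=|u(A)|>\rho(A)$; combining with $\rho(A)\le\rho(A')+k$-style bounds via monotonicity gives $\rho(A)=\rho(A')=|u|-1$. Once this rank-preservation lemma is in hand, (DN4) follows uniformly for the Step~2 aggregator and Step~3 sink nodes, completing the verification and thereby the theorem via Theorem~\ref{thm1}.
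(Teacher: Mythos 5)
Your proposal is correct and follows essentially the same route as the paper: reduce to verifying (DN1)--(DN4) and invoke Theorem~\ref{thm1}, with (DN1)--(DN3) handled by bookkeeping and (DN4) checked case-by-case over the same four node types. The one difference is that where the paper disposes of (DN4) by citing the proof of Theorem~2 in \cite{VvR_ISIT13_arXiv}, you actually supply the missing rank-preservation lemma $\rho\bigl((u)_{>0}\bigr)=\rho\bigl((u-\epsilon_{i,r})_{>0}\bigr)$ for $u\in\mathcal{C}_i(\mathbb{D})$, and your argument for it (any rank-violating set $B$ must contain $i$ since $u-\epsilon_{i,r}\in\mathbb{D}$, the restriction of $u$ to $B\cap(u)_{>0}$ is again in $\mathcal{D}_i(\mathbb{D})$, and conditions (2)--(3) of $\mathcal{C}_i(\mathbb{D})$ force $B\cap(u)_{>0}=(u)_{>0}$, sandwiching both ranks at $\vert u\vert-1$) is sound, so your write-up is in fact more self-contained than the paper's.
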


The following examples illustrate the construction procedure provided in Algorithm 1.
\begin{example}
Continuing with Example 5, the construction procedure for the discrete polymatroid considered in Example 1 is summarized in Table I. The different steps involved in the construction are depicted in Fig. \ref{example1_construction}. Since, in step 1, the basis vector $b=(1,1,1,0)$ is used and $\phi(b)=\rho(\{4\})=2,$ the constructed network admits a linear $(1,1,1;2)$-FNC solution. The linear $(1,1,1;2)$-FNC solution shown in Fig. \ref{example1_construction} is obtained by taking the global encoding matrix of the edge joining 4' and 4 to be the matrix $A_4$ given in Example 1. 
\begin{table}[h]
\centering
\includegraphics[totalheight=2 in,width=3.5 in]{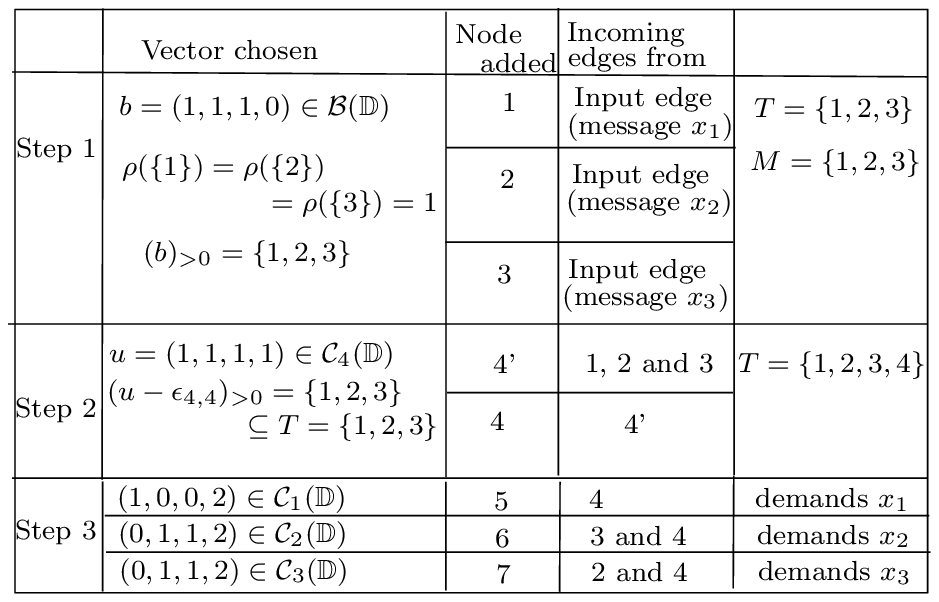}
\caption{Steps involved in the construction of a network from the discrete polymatroid in Example 1}
\label{fig:network1_table}
\end{table} 
\begin{figure}[h]
\centering
\includegraphics[totalheight=2.75 in,width=3.5 in]{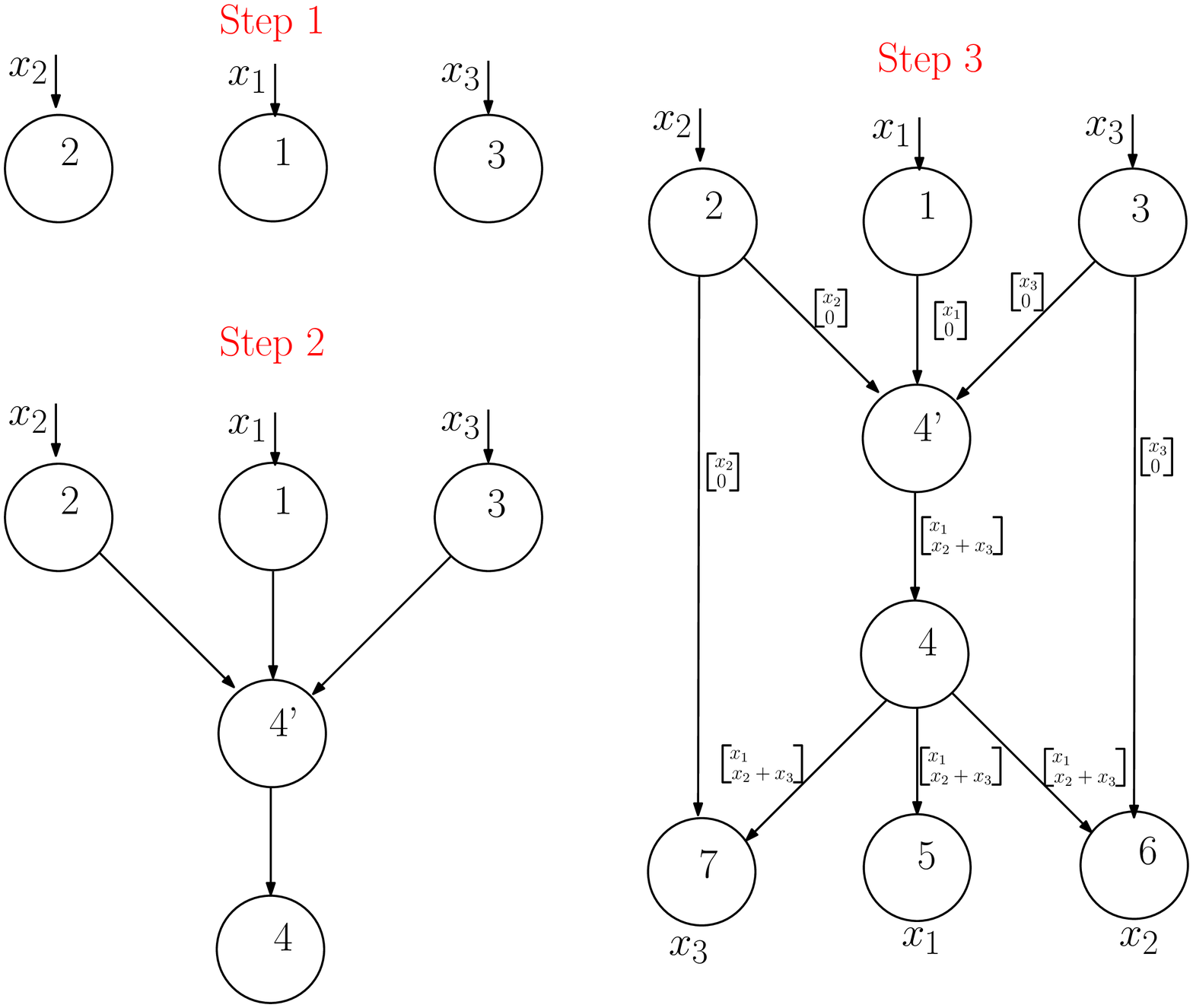}
\caption{Diagram showing the steps involved in the construction of a network from the discrete polymatroid in Example 1}
\label{example1_construction}
\end{figure} 
\end{example}
\begin{example}
Continuing with Example 6, the construction procedure for the discrete polymatroid considered in Example 2 is summarized in Table II. The different steps involved in the construction are depicted in Fig. \ref{example2_construction}. Since, in step 1, the basis vector $b=(2,1,1,0,0)$ is used and $\phi(b)=\rho(\{4\})=\rho(\{5\})=2,$ the constructed network admits a linear $(2,1,1;2)$-FNC solution. The linear $(2,1,1;2)$-FNC solution shown in Fig. \ref{example2_construction} is obtained by taking the global encoding matrix of the edge joining 4' and 4 to be the matrix $A_4$ given in Example 2 and that of the edge joining 5' and 5 to be the matrix $A_5$  given in Example 2.
\begin{table}[h]
\centering
\includegraphics[totalheight=2 in,width=3.5 in]{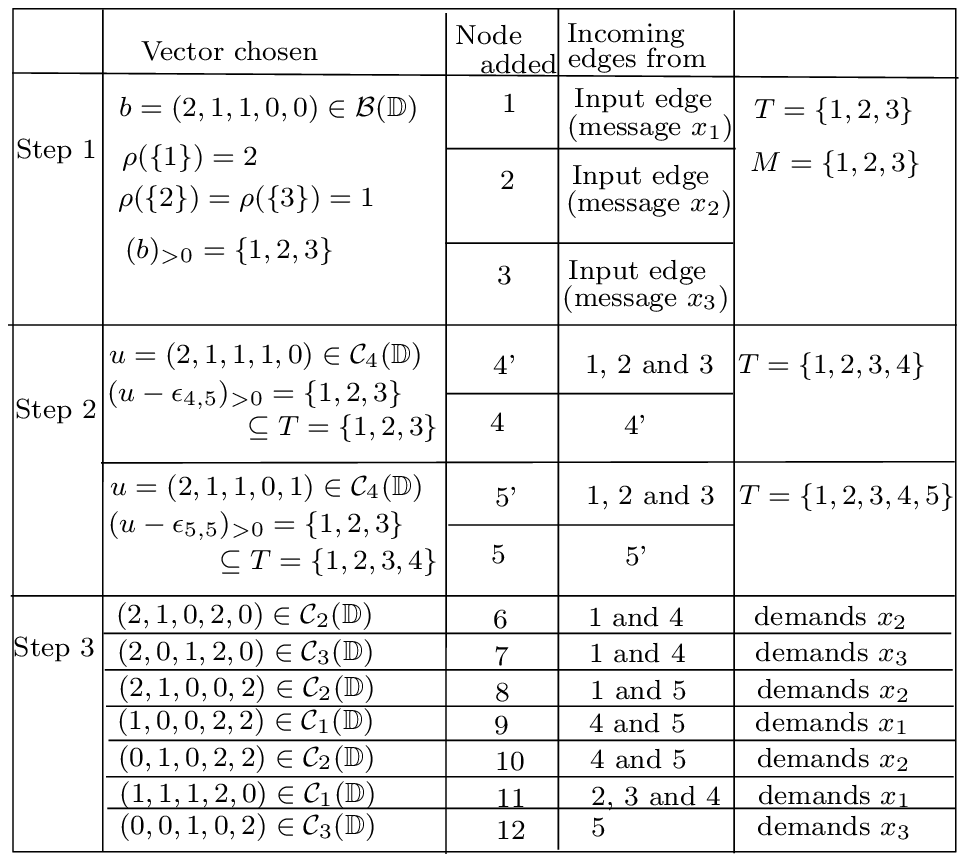}
\caption{Steps involved in the construction of a network from the discrete polymatroid in Example 2}
\label{fig:network2_table}
\end{table} 
\begin{figure}[h]
\centering
\includegraphics[totalheight=3.5 in,width=3.5 in]{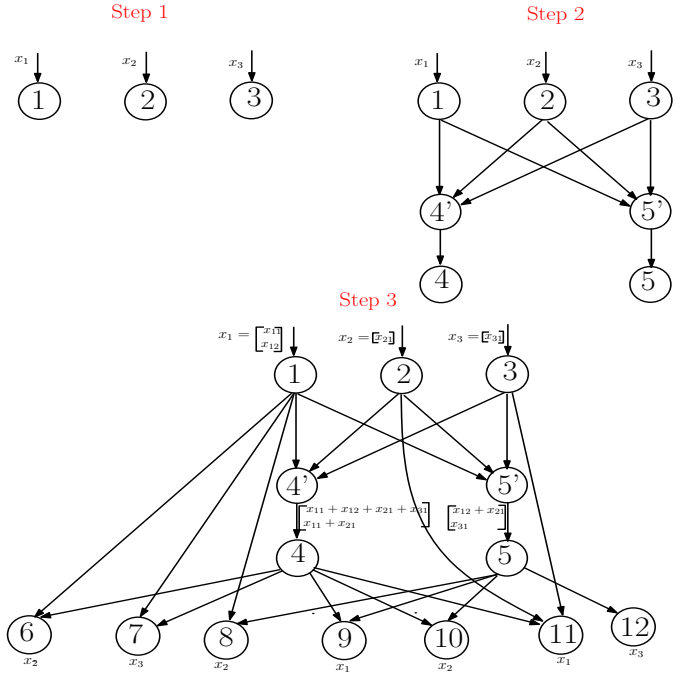}
\caption{Diagram showing the steps involved in the construction of a network from the discrete polymatroid in Example 2}
\label{example2_construction}
\end{figure} 
\end{example}
\section{Conclusion}
The connection between the existence of a linear $(k_1,k_2,\dotso,k_m;n)$-FNC solution for a network over $\mathbb{F}_q$ and the network being $(k_1,k_2,\dotso,k_m;n)$-discrete polymatroidal with respect to a discrete polymatroid representable over $\mathbb{F}_q$ was established. Using the algorithm provided to construct networks from discrete polymatroids, example networks were provided which do not admit any scalar or vector solution, but admit a linear FNC solution.   
   
\end{document}